\documentclass{ifacconf}

\usepackage{graphicx}      
\usepackage{natbib}        
\usepackage{url} 
\usepackage{algorithm}
\usepackage{algorithmic}
\usepackage{amsmath}
\usepackage{mathabx}
\usepackage{xcolor}

\newcommand{\vMinj}{{v}_\mathit{min,j}} 
\newcommand{\vMaxj}{{v}_\mathit{max,j}}

\newtheorem{theorem}{\textbf{Theorem}}
\newtheorem{probl}{\textbf{Problem}}
\newtheorem{coro}{\textbf{Corollary}}
\newtheorem{rmk}{\textbf{Remark}}

\begin{document}

\thispagestyle{empty}
\onecolumn
{\large
\noindent \copyright\ 2026 the authors. This work has been accepted to IFAC
for publication under a Creative Commons Licence CC-BY-NC-ND.

\noindent This paper has been accepted for presentation at the
23rd IFAC World Congress, 2026.
}
\newpage
\twocolumn
\setcounter{page}{1}

\maketitle

\begin{frontmatter}

\title{Safety-Assured Arrival Scheduling in Sequential UAM Corridor Sections\\under Speed and Separation Constraints\thanksref{footnoteinfo}
} 

 \thanks[footnoteinfo]
 {This work is based on results obtained from a project, JPNP22002, commissioned by the New Energy and Industrial Technology Development Organization (NEDO).}

\author[First]{Sasinee Pruekprasert} 
\author[Second]{Shinji Nakadai} 
\author[First]{Katsuhiro Nishinari}

\address[First]{The University of Tokyo, Tokyo, Japan. (e-mail: spruekprasert@g.ecc.u-tokyo.ac.jp, tknishi@mail.ecc.u-tokyo.ac.jp)}
\address[Second]{Intent Exchange, Inc., Tokyo,  Japan.\\(e-mail: nakadai@intent-exchange.com)}
 
\begin{abstract}  
This paper presents a safety-assured arrival-scheduling framework for Urban Air Mobility (UAM) corridor operations.
We propose an analytical method to compute a sufficient ETA gap at Constrained Waypoints (CWPs) that guarantees
longitudinal separation along sequential corridor sections with heterogeneous speed limits.
The resulting ETA-gap condition depends on section-specific
speed bounds and the required separation distance, providing
an efficiently computable rule suitable for integration into
future digital ETA-scheduling and air traffic management systems.
We show that the computed ETA gap ensures safe separation across all corridor sections under prescribed section travel times and speed limits. 
Numerical simulations for a decreasing-speed corridor
confirm that vehicles coordinated with the proposed
mechanism adjust their speeds to maintain the required
spacing, avoid potential collisions, and support improved
traffic flow compared with unscheduled operations.
\end{abstract}

\begin{keyword}
Urban Air Mobility, UAM corridors, Arrival scheduling, Safety assurance, Air traffic management
\end{keyword}

\end{frontmatter}
\section{Introduction} 

Urban Air Mobility (UAM) introduces new air traffic management concepts for on-demand and scheduled aerial transportation in metropolitan areas, attracting growing attention from industry, academia, and government~\citep{thipphavong2018urban}.
As demand increases, existing Air Traffic Control (ATC) systems may face capacity challenges under high traffic density~\citep{vascik2018scaling}. 
At the same time, structured airspace concepts can improve safety but may increase travel delays compared with free-flight operations~\citep{bauranov2021designing}.
It is therefore essential to develop coordination mechanisms that maintain both safety and operational efficiency under these emerging constraints.

To address these challenges, the Federal Aviation Administration (FAA) incorporated UAM corridors into its UAM ConOps~\citep{fontaine2023urban}. 
UAM corridors are reserved airspace volumes managed by designated authorities, such as Providers of Services for UAM (PSUs), who regulate access and performance. 
These corridors can support applications such as air taxis~\citep{muna2021air}, helping to reduce congestion and infrastructure constraints~\citep{wang2021air}, and are envisioned for electric Vertical Takeoff and Landing (eVTOL) and powered-lift aircraft within the FAA’s Advanced Air Mobility (AAM) framework.

Prior work has studied UAM corridor design through graph-based congestion models~\citep{wang2021air, jiang2022metrics}, geometric or layered architectures~\citep{muna2021air}, and taxonomies of structured airspace trade-offs~\citep{lee2023airspace}. 
\cite{Asslouj2024fixed} further extends these concepts to fixed-corridor systems for Unmanned Aircraft Systems (UAS). However, these studies focus mainly on corridor geometry and routing rather than separation conditions for safe inter-vehicle spacing, as envisioned under Digital Flight~\citep{wing2022digital}.

Conventional airspace-management research has also studied coordinated scheduling and separation assurance, including Integrated Demand Management~\citep{smith2016integrated}, spatiotemporal conflict-free scheduling~\citep{hildum2012scheduling}, human-in-the-loop comparisons of separation responsibility~\citep{wing2010comparison}, and pre-flight scheduling for AAM operations~\citep{yokoyama2025performance}. 
Although these works highlight the value of scheduled-arrival management, they do not address the fine-grained, waypoint-level safety requirements for high-density UAM corridors.

Building on this foundation, this work extends the ETA-based safety framework to sequences of corridor sections, each with distinct speed bounds that may increase or decrease along the route.
Such multi-section configurations naturally arise in realistic UAM operations due to changes in altitude layers, airspace policies, and local traffic densities.
Unlike the single-section setting in~\citep{pruekprasert2025safe}, heterogeneous speed limits introduce new challenges: a safe ETA gap computed for one section may no longer preserve separation after vehicles transition into an adjacent section with different constraints, and maintaining safety must account for the possibility that vehicles occupy different sections at the same time. 
To address these issues, our main contributions are:
\begin{itemize}
    \item extending ETA-based coordination to sequential corridor sections with heterogeneous speed limits;
    \item constructing conservative spatiotemporal bounds that enclose all admissible trajectories under prescribed section travel times and speed limits;
    \item reducing the safe ETA-gap computation to checking a finite set of critical times induced by the trajectory-bound critical points; and
    \item complementing the theoretical results with numerical validation of the computed ETA gaps in a decreasing-speed corridor with downstream traffic compression.
\end{itemize}
Our method provides efficiently computable temporal-separation conditions that reduce the need
for the complex optimization required in~\citep{pruekprasert2025safe} and remain scalable for multi-section UAM ETA management.
The computed ETA gap can therefore be incorporated as a safety constraint in broader ETA-scheduling and traffic-management systems.

The remainder of this paper is organized as follows. 
Section~\ref{sec:UAMcorridor} introduces the UAM corridor setting and formulates the safe ETA-gap problem. 
Section~\ref{sec:bounds} derives conservative spatiotemporal bounds under corridor speed limits, and Section~\ref{sec:etagap} uses these bounds to construct a computationally efficient guaranteed-safe ETA gap. 
Section~\ref{sec:simulation} presents numerical simulations. Section~\ref{sec:conclusion} presents the conclusion. 

\section{UAM Corridor with ETAs at CWPs}
\label{sec:UAMcorridor}  

We consider Air Traffic Control (ATC) for inter-vehicle self-separation in Urban Air Mobility (UAM) corridors using shared Estimated Times of Arrival (ETAs) at Constrained Waypoints (CWPs).
As illustrated in Fig.~\ref{fig:UAM_ETA}, CWPs partition the airspace into multiple corridor sections.
Each section may reserve specific airspace for different flight phases, and CWPs connect these sections to form an integrated UAM corridor network.




 
\subsection{Vehicle Kinematics and Corridor Constraints}


We consider $n$ aerial vehicles, indexed by $i \in \{0,\dots,n-1\}$, entering the corridor in order.
Vehicle motion within is modeled as one-dimensional and governed by 
\begin{equation}\label{eq:kinematic}
    \dot{x}_i(t) = v_i(t), \qquad 
\dot{v}_i(t) = a_i(t),
\end{equation}
where $x_i(t)$ and $v_i(t)$ denote the position and speed of vehicle~$i$, and $a_i(t)$ is its acceleration input at time~$t$.

We consider $m$ sequential corridor sections indexed by $j \in {0,\dots,m-1}$, where each section $j$ connects CWP$_j$ and CWP$_{j+1}$.
For each section~$j$, we specify speed limits $\vMinj$ and $\vMaxj$ representing the operational bounds imposed by the corridor design or airspace regulations. 
Accordingly, while vehicle~$i$ is within section~$j$, its speed must satisfy
\begin{equation}
\label{eq vlim}
0 < \vMinj \le v_i(t) \le \vMaxj.
\end{equation}
We also assume consecutive speed-limit intervals overlap, i.e.,
$[\vMinj,\vMaxj]\cap[{v}_\mathit{min,\,j+1},{v}_\mathit{max,\,j+1}]\neq\emptyset$, so vehicles can adjust speeds when transitioning between sections.

To simplify ETA scheduling, we assign all vehicles a common travel time in each corridor section~$j$ to be
$ 
\tau_j = \frac{l_j}{v_{\mathit{avg},j}},
$ 
where $v_{\mathit{avg},j} = (\vMinj + \vMaxj)/2$ is a representative average speed and $l_j$ is the section length. 
The framework, however, can be extended to heterogeneous travel times. 

\begin{figure}[tp] 
\centering
\includegraphics[width=1\columnwidth]{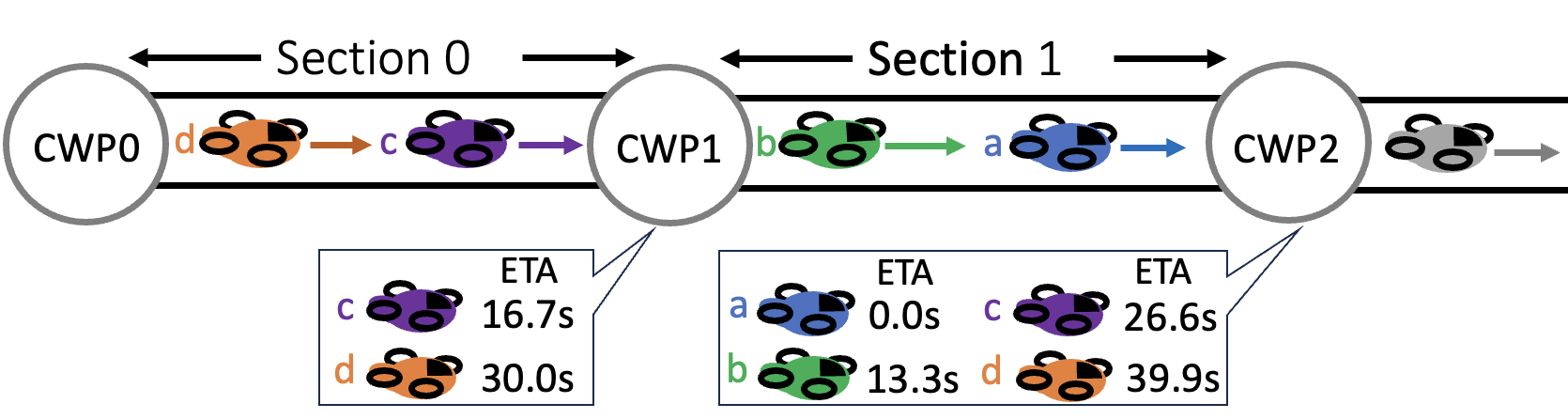}
\caption{A UAM corridor network with ETAs at CWPs.}
\label{fig:UAM_ETA}
\end{figure}

\subsection{Problem Formulation}

According to Near Mid-Air Collision (NMAC) avoidance rules, a minimum separation
distance is required to ensure safety between aerial vehicles, typically set to a
fixed length~\citep{johnson2017exploration, muna2021air}. 
We adopt this separation distance as the safety constraint for ETA 
scheduling at CWPs, requiring that the spacing between vehicles
never falls below $\mathit{safeD}$, and formulate the problem of finding a safe ETA gap $\Delta t_{{i, i+1}}$ as follows.
 
\begin{probl} \label{probl:etagap}
For two consecutive vehicles $i$ and $i{+}1$ entering the corridor network in order at CWP$_0$,
find 
\[
\Delta t_{{i, i+1}} \;:=\; t_{i+1,\text{enter}} - t_{i,\text{enter}} \;> 0
\]
\begin{align*} 
\begin{split}
&\text{such that }   x_i(t) 
      - x_{i+1}(t)  
\;\ge\; \mathit{safeD},\\[4pt]
&\quad\text{for all } t \in [
t_{i+1,\text{enter}},
t_{i,\text{enter}} + \tau_0 + \tau_1 + \cdots + \tau_{m-1}],
\end{split}
\end{align*}
where $t_{i,\text{enter}}$ and $t_{i+1,\text{enter}}$ are the entry
times at CWP$_0$ if vehicle $i$ and $i+1$, respectively, and $\mathit{safeD}$ is the NMAC minimum separation distance.
\end{probl}

Note that because the travel times $\tau_0, \tau_1, \dots, \tau_{m-1}$ are fixed, an ETA gap $\Delta t_{i,i+1}$ at CWP$_0$ also induces ETA gaps $\Delta t_{i,i+1} + \tau_0 + \tau_1 + \cdots + \tau_{j-1}$ at each CWP$_j$.



A trivial solution to Problem~\ref{probl:etagap} is to choose
$\Delta t_{i,i+1} > \tau_0 + \tau_1 + \cdots + \tau_{m-1}$,
which ensures that vehicle $i{+}1$ enters the corridor at CWP$_0$ after
vehicle $i$ has exited at CWP$_m$.  
For example, for the corridor in Fig.~\ref{fig:bounds}, this trivial solution requires $\Delta t_{i,i+1} > \tau_0 + \tau_1 + \tau_2 + \tau_3 = 42.1$\,s.
Although always safe, this choice is excessively conservative.  
We therefore aim for the smallest $\Delta t_{i,i+1}$ that still guarantees
$x_i(t)-x_{i+1}(t)\ge \mathit{safeD}$ for all admissible trajectories.

In the next section, we propose a computationally efficient method to compute
a sufficient ETA gap.  
Although this gap is not necessarily the globally minimal ETA gap, it is guaranteed to satisfy the safety condition in Problem~\ref{probl:etagap}
and is typically much smaller than the trivial bound above.







\section{Spatiotemporal Trajectory Bounds}
\label{sec:bounds}

\begin{figure}[t]
\centering
\includegraphics[width=1\columnwidth]{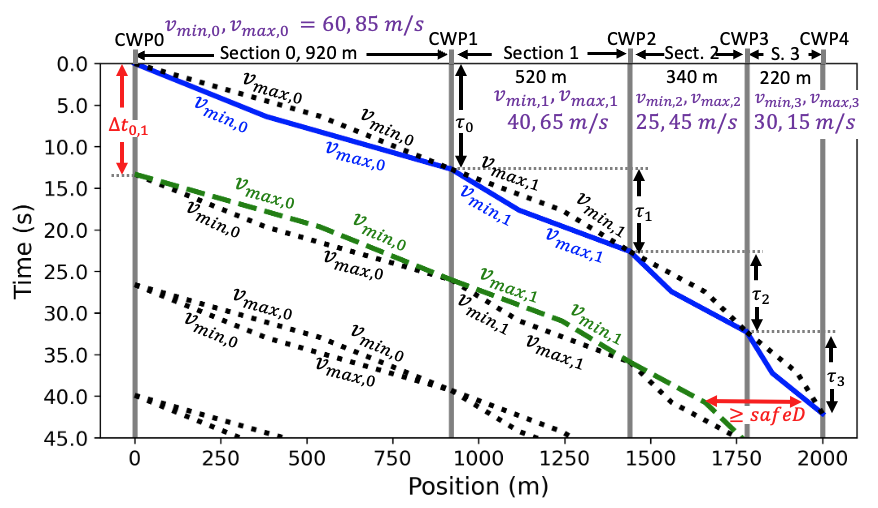}
\caption{Conservative spatiotemporal bounds of vehicles trajectories in  UAM corridors with $\Delta t_{{i, i+1}}=13.3$~s for $\mathit{safeD}=300$~m.
Length of corridors 0-3: 920, 520, 340, and 220~m; 
Speed limits: 
${v}_{\min,0}=60$, ${v}_{\max,0}=85$;
${v}_{\min,1}=40$, ${v}_{\max,1}=65$;
${v}_{\min,2}=25$, ${v}_{\max,2}=45$;
${v}_{\min,3}=15$, ${v}_{\max,3}=30$~m/s.
Travel times: 
$\tau_0=12.7$, $\tau_1=9.9$, $\tau_2=9.7$, $\tau_3=9.8$~s.
}
\label{fig:bounds}
\end{figure}


To compute a sufficient ETA gap for Problem~\ref{probl:etagap}, we first construct conservative
spatiotemporal bounds on vehicle trajectories, based on the corridor speed limits.
These bounds enable a computationally efficient method for deriving a guaranteed-safe ETA gap in the next section.

Given that all vehicles operate within the speed range $[\vMinj, \vMaxj]$ and spend $\tau_j$ time in each corridor section $j$, their admissible
spatiotemporal trajectories are bounded by the slopes as shown in
Fig.~\ref{fig:bounds}.  
These bounds arise from two extreme speed profiles: one that switches from
$\vMinj$ to $\vMaxj$ (blue solid line for vehicle~0 in
Fig.~\ref{fig:bounds}) and one that switches from $\vMaxj$ to $\vMinj$
(green dashed line for vehicle~1), assuming instantaneous speed changes.

Formally, let $l_j$ denote the length of corridor section
$j \in \{0, 1, \ldots, m-1\}$, and let CWP$_0$ be located at position~0.
We define the cumulative time and position offsets as
\begin{equation}
\label{eq:Lj Tij}
L_j := \sum_{\ell=0}^{j-1} l_{\ell} 
\quad\text{  and  }\quad
T_{i,j} :=  t_{i,\text{enter}} + \sum_{\ell=0}^{j-1} \tau_{\ell},  
\end{equation}
for $j \in \{1, 2, \ldots, m\}$,
with $L_0 = 0$ and $T_{i,0} = t_{i,\text{enter}}$.
Hence, $L_j$ is the position of CWP$_j$ and $T_{i,j}$ is the scheduled time at which vehicle $i$ arrives at CWP$_j$.


For each corridor section $j$, define $\widecheck{t}_j$ as the
unique nonnegative solution of
\begin{equation}\label{eq:tmintomax}
    v_{\min,j}\,\widecheck{t}_j
    + v_{\max,j}\bigl(\tau_j-\widecheck{t}_j\bigr)
    = l_j,
\end{equation}
representing the switching time of a trajectory that first flies at
$v_{\min,j}$ and then at $v_{\max,j}$. Note that such a solution exists if 
$v_{\min,j}\tau_j \le l_j \le v_{\max,j}\tau_j$, which is assumed throughout. 
For $t \in [T_{i,j},\,T_{i,j+1}]$, the trajectory $\widecheck{x}_i(t)$ is
\begin{align}
\begin{split}
&\widecheck{x}_i(t)  =
\begin{cases}
L_j + v_{\min,j}(t - T_{i,j}), \\
\qquad\qquad \text{if } T_{i,j} \le t \le T_{i,j} + \widecheck{t}_j, \\[6pt]
L_j + v_{\min,j}\widecheck{t}_j
  + v_{\max,j}(t - T_{i,j} - \widecheck{t}_j), \\
\qquad\qquad \text{if } T_{i,j} + \widecheck{t}_j \le t \le T_{i,j+1}.
\end{cases}
\end{split}
\label{eq:x_min_to_max}
\end{align}

Similarly, define $\widehat{t}_j$ as the unique nonnegative solution of
\begin{equation}\label{eq:tmaxtomin}
    v_{\max,j}\,\widehat{t}_j
    + v_{\min,j}\bigl(\tau_j-\widehat{t}_j\bigr)
    = l_j,
\end{equation}
corresponding to a trajectory that first flies at $v_{\max,j}$ and then 
$v_{\min,j}$.  
The trajectory $\widehat{x}_i(t)$ is defined analogously by swapping
$v_{\min,j}$ and $v_{\max,j}$, and replacing $\widecheck{t}_j$ with
$\widehat{t}_j$ in~\eqref{eq:x_min_to_max}.




We now show that these two trajectories bound all admissible motions of a
vehicle that satisfy the speed limits~\eqref{eq vlim} and the prescribed
section travel times~$\tau_j$.

\begin{theorem}
\label{thm:bounds}
Consider any vehicle $i$ that satisfies the speed limits condition~\eqref{eq vlim}, and spends  $\tau_j$ time units in
each corridor section $j \in \{0,\dots,m-1\}$.
Then, for all $t \in [T_{i,0}, T_{i,m}]$,
the vehicle position $x_i(t)$ satisfies
\begin{equation*}
    \widecheck{x}_i(t)
\;\le\;
x_i(t)
\;\le\;
\widehat{x}_i(t),
\end{equation*}
i.e., every admissible trajectory lies between the two extreme spatiotemporal bounds.
\end{theorem}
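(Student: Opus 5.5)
The proof reduces to a single section $j$. Because vehicle $i$ spends exactly $\tau_j$ time units in each section, it is at CWP$_j$ at time $T_{i,j}$, that is, $x_i(T_{i,j}) = L_j$, for every $j \in \{0,\dots,m\}$. The bounding trajectories $\widecheck{x}_i$ and $\widehat{x}_i$ also pass through $(T_{i,j}, L_j)$ by construction: \eqref{eq:tmintomax} and \eqref{eq:tmaxtomin} make the displacement over $[T_{i,j},T_{i,j+1}]$ equal to $l_j$, and $L_{j+1}=L_j+l_j$. The intervals $[T_{i,j},T_{i,j+1}]$ cover $[T_{i,0},T_{i,m}]$, so it suffices to prove $\widecheck{x}_i(t)\le x_i(t)\le \widehat{x}_i(t)$ for $t$ in one such interval, with both endpoints pinned.

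Fix $j$ and $t\in[T_{i,j},T_{i,j+1}]$, and write $s=t-T_{i,j}\in[0,\tau_j]$. Since $\dot{x}_i=v_i$ and \eqref{eq vlim} holds on the section, integrating forward from the entry point gives
\[
L_j+v_{\min,j}s \le x_i(t) \le L_j+v_{\max,j}s .
\]
Integrating backward from the exit point $x_i(T_{i,j+1})=L_{j+1}$ gives
\[
L_{j+1}-v_{\max,j}(\tau_j-s) \le x_i(t) \le L_{j+1}-v_{\min,j}(\tau_j-s).
\]
So $x_i(t)$ is at least the maximum of the two lower affine bounds and at most the minimum of the two upper ones.

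The key step is to identify these envelopes with the piecewise-linear trajectories. The two lower lines $L_j+v_{\min,j}s$ and $L_{j+1}-v_{\max,j}(\tau_j-s)$ intersect exactly at $s=\widecheck{t}_j$ by \eqref{eq:tmintomax}. The first has the smaller slope, so it is the larger one for $s\le\widecheck{t}_j$ and the smaller one afterwards. Hence their maximum is exactly the two-piece formula \eqref{eq:x_min_to_max}, which is $\widecheck{x}_i(t)$. Symmetrically, the two upper lines meet at $s=\widehat{t}_j$ by \eqref{eq:tmaxtomin}, and their minimum is $\widehat{x}_i(t)$. The standing assumption $v_{\min,j}\tau_j\le l_j\le v_{\max,j}\tau_j$ guarantees $\widecheck{t}_j,\widehat{t}_j\in[0,\tau_j]$, so both switching points lie inside the section and the case split is legitimate. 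Combining the sections then gives the claim on all of $[T_{i,0},T_{i,m}]$.

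I expect no real obstacle; the only care needed is this envelope identification. For it I would check the sign of the difference of the two lines, which is $(v_{\max,j}-v_{\min,j})(\widecheck{t}_j-s)$ for the lower pair, and similarly for the upper pair. I would also note that the argument uses only integrated speed bounds, so it holds for any measurable speed profile, independent of the acceleration input $a_i$.
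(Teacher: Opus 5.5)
Your proof is correct and follows essentially the same route as the paper. The paper also bounds $x_i(t)$ from below by the forward constraint $L_j+v_{\min,j}(t-T_{i,j})$ before the switching time $T_{i,j}+\widecheck{t}_j$, and by the backward constraint $L_{j+1}-v_{\max,j}(T_{i,j+1}-t)$ after it, with the upper bound handled symmetrically; it phrases this as a contradiction with a case split on the switching time, whereas you phrase it directly as an identification of the max/min envelope of the two affine bounds (which also sidesteps the division by $t^*-T_{i,j}$ at the section entry).
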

\begin{pf} 
Fix a corridor section $j$ and consider any admissible trajectory $x_i(t)$
that satisfies the speed bounds~\eqref{eq vlim} and the fixed travel time
$\tau_j$.
Let $t^*$ be any time such that vehicle $i$ is flying in corridor section $j$, and define
$x^* := x_i(t^*)$.
Then, we have $t^* \in [T_{i,j},\,T_{i,j+1}]$ and $x^* \in [L_j,\,L_{j+1}]$.
To prove the theorem, it suffices to show that $x^*$ lies between the
lower and upper bounds
$\widecheck{x}_i(t^*)$ and $\widehat{x}_i(t^*)$.

Assume, for the sake of contradiction, that either
$x^* < \widecheck{x}_i(t^*)$ or
$\widehat{x}_i(t^*) < x^*$.
We treat only the former case, as the latter follows symmetrically.
Suppose $x^* < \widecheck{x}_i(t^*)$, and let $\widecheck{t}_j$ be the speed-switching time  in~\eqref{eq:tmintomax}.

If $t^* \le T_{i,j} + \widecheck{t}_j$, then from~\eqref{eq:x_min_to_max}, we have
$
\widecheck{x}_i(t^*)
= L_j + v_{\min,j}(t^* - T_{i,j}).
$
Since $x^* < \widecheck{x}_i(t^*)$, we obtain
\[
\frac{x^* - L_j}{t^* - T_{i,j}}
<
\frac{\widecheck{x}_i(t^*) - L_j}{t^* - T_{i,j}}
= v_{\min,j},
\]
implying an average speed strictly below $v_{\min,j}$ from $(L_j,T_{i,j})$ to $(x^*,t^*)$, contradicting~\eqref{eq vlim}.



Otherwise, $t^* > T_{i,j}+\widecheck{t}_j$. 
From the second case of~\eqref{eq:x_min_to_max},
\[
\widecheck{x}_i(t^*)
 = L_j + v_{\min,j}\widecheck{t}_j
   + v_{\max,j}(t^* - T_{i,j} - \widecheck{t}_j).
\]
Using 
$L_{j+1}
 = L_j + v_{\min,j}\widecheck{t}_j
     + v_{\max,j}(\tau_j - \widecheck{t}_j)$ and $T_{i, j+1} = T_{i,j} + \tau_j$,
this expression can be rewritten as
\[
\widecheck{x}_i(t^*)
 = L_{j+1} - v_{\max,j}(T_{i,j+1}-t^*).
\]
Since $x^* < \widecheck{x}_i(t^*)$, we obtain
\[
L_{j+1} - x^*
  >  L_{j+1} - \widecheck{x}_i(t^*) = v_{\max,j}(T_{i,j+1}-t^*),
\]
which requires an average speed strictly above $v_{\max,j}$ from
 $(x^*,t^*)$ to $(L_{j+1},\,T_{i,j+1})$, contradicting~\eqref{eq vlim}.



As neither case is possible, 
$\widecheck{x}_i(t^*) \le x^* \le \widehat{x}_i(t^*)$
for all $t^*$ in section $j$. Since $T_j$ and $L_j$ accumulate across sections, the bound holds over the
entire corridor network.
\hfill\hfill\qed\end{pf}
\begin{rmk}
The bounds in Theorem~\ref{thm:bounds} apply to admissible trajectories
satisfying the section speed limits and prescribed section travel times.
Additional jerk, actuator, or transition-dynamics limits preserve the safety
argument if the realized trajectories remain within these bounds. If such
constraints or ETA-tracking errors change the realized CWP times, the ETA
schedule should be recomputed or enlarged by a timing buffer, e.g., a
$\pm\varepsilon$ ETA error at CWP$_0$ can be handled by adding
$2\varepsilon$ to the ETA gap.
\end{rmk}

\section{Computing a Sufficient ETA Gap}
\label{sec:etagap} 

This section derives a computationally efficient sufficient ETA gap.
By Theorem~\ref{thm:bounds},
$\widecheck{x}_i(t)$ lower-bounds the leader trajectory and
$\widehat{x}_{i+1}(t)$ upper-bounds the follower trajectory.
Therefore, $\widecheck{x}_i(t)-\widehat{x}_{i+1}(t)$ lower-bounds their
separation, yielding the following sufficient condition.

\begin{lem}
\label{lem:etagap_bound_sufficient}
Consider two consecutive vehicles $i$ and $i+1$ satisfying the speed-limit
condition~\eqref{eq vlim} and spending $\tau_j$ time units in each corridor
section $j\in\{0,\dots,m-1\}$. Let $\Delta t_{i,i+1}=T_{i+1,0}-T_{i,0}$ denote their ETA gap at CWP$_0$.
If
\begin{equation}
\label{eq:bound_sep_condition}
\widecheck{x}_i(t)-\widehat{x}_{i+1}(t)\ge \mathit{safeD},
\quad \forall t\in [T_{i+1,0},T_{i,m}],
\end{equation}
then $\Delta t_{i,i+1}$ is a feasible solution of Problem~\ref{probl:etagap}.
\end{lem}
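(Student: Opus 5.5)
The plan is a direct chaining of Theorem~\ref{thm:bounds} applied separately to the two vehicles, followed by an interval check against Problem~\ref{probl:etagap}. Fix an arbitrary admissible pair of trajectories $x_i(\cdot)$ and $x_{i+1}(\cdot)$ satisfying \eqref{eq vlim} and the prescribed section travel times $\tau_j$. I will show that the separation requirement of Problem~\ref{probl:etagap} holds for every such pair. Since \eqref{eq:bound_sep_condition} itself forces $T_{i+1,0}\le T_{i,m}$, and the problem concerns a follower entering after the leader, I will take $\Delta t_{i,i+1}>0$ as part of the setting.

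\textbf{Step 1 (leader lower bound).} Apply Theorem~\ref{thm:bounds} to vehicle $i$. This gives $x_i(t)\ge \widecheck{x}_i(t)$ for all $t\in[T_{i,0},T_{i,m}]$.

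\textbf{Step 2 (follower upper bound).} Apply Theorem~\ref{thm:bounds} to vehicle $i+1$. This gives $x_{i+1}(t)\le \widehat{x}_{i+1}(t)$ for all $t\in[T_{i+1,0},T_{i+1,m}]$.

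\textbf{Step 3 (interval bookkeeping).} This is the only step needing care. The interval in Problem~\ref{probl:etagap} is $[t_{i+1,\text{enter}},\, t_{i,\text{enter}}+\sum_{j}\tau_j]=[T_{i+1,0},T_{i,m}]$ by \eqref{eq:Lj Tij}. Both bounds must be valid on all of it.
\begin{itemize}
\item Because $\Delta t_{i,i+1}>0$, we have $T_{i,0}<T_{i+1,0}$, so $[T_{i+1,0},T_{i,m}]\subseteq[T_{i,0},T_{i,m}]$ and Step 1 applies.
\item Because $T_{i+1,m}=T_{i,m}+\Delta t_{i,i+1}>T_{i,m}$, we have $[T_{i+1,0},T_{i,m}]\subseteq[T_{i+1,0},T_{i+1,m}]$ and Step 2 applies.
\end{itemize}
If the interval is empty, the requirement holds vacuously.

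\textbf{Step 4 (chaining).} For every $t\in[T_{i+1,0},T_{i,m}]$,
\[
x_i(t)-x_{i+1}(t)\;\ge\;\widecheck{x}_i(t)-\widehat{x}_{i+1}(t)\;\ge\;\mathit{safeD},
\]
where the first inequality uses Steps 1 and 2 and the second uses \eqref{eq:bound_sep_condition}. The trajectories were arbitrary admissible ones, so $\Delta t_{i,i+1}$ is feasible for Problem~\ref{probl:etagap}.
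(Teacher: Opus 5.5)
Your proposal is correct and matches the paper's proof: Theorem~\ref{thm:bounds} bounds the leader from below and the follower from above, and chaining with \eqref{eq:bound_sep_condition} gives the separation; your explicit check that $[T_{i+1,0},T_{i,m}]$ lies inside both validity intervals (using $\Delta t_{i,i+1}>0$) is a step the paper leaves implicit. One harmless slip: \eqref{eq:bound_sep_condition} does not by itself force $T_{i+1,0}\le T_{i,m}$, since it holds vacuously on an empty interval, but you never use that remark and you treat the empty case correctly in Step~3.
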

 
\begin{pf}
By Theorem~\ref{thm:bounds}, for all $t\in [T_{i+1,0},T_{i,m}]$,
$x_i(t)\ge \widecheck{x}_i(t)$ and
$x_{i+1}(t)\le \widehat{x}_{i+1}(t)$. Hence,
\[
x_i(t)-x_{i+1}(t)
\ge
\widecheck{x}_i(t)-\widehat{x}_{i+1}(t)
\ge \mathit{safeD}.
\]
Therefore, the condition in Problem~\ref{probl:etagap} holds.
\hfill\hfill\qed
\end{pf}
 
Next, we compute the smallest ETA gap satisfying \eqref{eq:bound_sep_condition}. 
To do so, we extract the \emph{critical points} of the trajectory bounds 
$\widecheck{x}_i(t)$ and $\widehat{x}_{i+1}(t)$, which correspond to the
times at which the slope changes.
Let $\widecheck{C}_i$ and $\widehat{C}_{i+1}$ denote these sets of 
critical points, defined as follows:
\begin{align*}
\begin{alignedat}{2}
\widecheck{C}_i
&\;:=\;&\;
&\bigcup_{j=0}^{m-1} \Bigl\{
   (L_j,\, T_{i,j}),\,
   (L_j + v_{\min,j}\,\widecheck{t}_j,\, T_{i,j} + \widecheck{t}_j), \\
&&&
   (L_{j+1},\, T_{i,j+1})
\Bigr\},
\\[8pt]
\widehat{C}_{i+1}
&\;:=\;&\;
&\bigcup_{j=0}^{m-1} \Bigl\{
   (L_j,\, T_{i+1,j}),\,
   (L_j + v_{\max,j}\,\widehat{t}_j,\\
&&&   T_{i+1,j} + \widehat{t}_j), \, 
   (L_{j+1},\, T_{i+1,j+1})
\Bigr\},
\end{alignedat}
\end{align*}
where  $\widecheck{t}_j$ and $\widehat{t}_j$ are the unique
nonnegative solutions of \eqref{eq:tmintomax} and \eqref{eq:tmaxtomin},
respectively

For a given pair of vehicles $i$ and $i{+}1$,  
we collect all time instants at which either trajectory bound changes slope.
Define the set of relevant time instants as
\begin{align}
\label{eq:critical_time_set}
\widetilde{T_i}
:= \{\, t : \exists x, (x,t)\in
\widecheck{C}_i \cup \widehat{C}_{i+1} \,\}
\cap
[T_{i+1,0},\, T_{i,m}].
\end{align}

The following theorem states that \eqref{eq:bound_sep_condition} can be ensured by checking the  separation only at the time instants in $\widetilde{T_i}$.

\begin{theorem}
\label{thm:critical points}
Given a pair of vehicles $i$ and $i{+}1$ and a safety distance $\mathit{safeD} \ge 0$.
\[
\begin{aligned}
&\text{If }
\begin{aligned}[t]
\widecheck{x}_i(t)
- \widehat{x}_{i+1}(t)
&\;\ge\; \mathit{safeD} \text{ for all } t \in \widetilde{T_i},
\end{aligned}
\\ 
&\text{Then }
\begin{aligned}[t]
\widecheck{x}_i(t)
- \widehat{x}_{i+1}(t)
&\;\ge\; \mathit{safeD}  
 \text{ for all } t \in [T_{i+1,0},\,T_{i,m}].
\end{aligned}
\end{aligned}
\]
\end{theorem}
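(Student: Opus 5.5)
The argument is that the separation function $d(t):=\widecheck{x}_i(t)-\widehat{x}_{i+1}(t)$ is continuous and piecewise affine on $[T_{i+1,0},T_{i,m}]$. Its breakpoints all lie in $\widetilde{T_i}$, together with the two endpoints of the interval. An affine function on a closed interval attains its minimum at an endpoint, so $d$ attains its minimum over the whole interval at one of these points. Checking $d\ge \mathit{safeD}$ there then suffices.

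\emph{Step 1: piecewise affinity and continuity.} By \eqref{eq:x_min_to_max}, on each section interval $[T_{i,j},T_{i,j+1}]$ the function $\widecheck{x}_i$ consists of two affine pieces with slopes $v_{\min,j}$ and $v_{\max,j}$. The pieces meet at $T_{i,j}+\widecheck{t}_j$, and they are consistent: the first gives $L_j$ at $T_{i,j}$, and the second gives $L_{j+1}$ at $T_{i,j+1}$ by \eqref{eq:tmintomax}. Hence $\widecheck{x}_i$ is continuous on $[T_{i,0},T_{i,m}]$, and its slope can change only at times that are the time coordinates of points in $\widecheck{C}_i$. The same reasoning applied to $\widehat{x}_{i+1}$ via \eqref{eq:tmaxtomin} gives continuity on $[T_{i+1,0},T_{i+1,m}]$, with slope changes only at times from $\widehat{C}_{i+1}$.

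\emph{Step 2: domain and endpoints.} Since $\Delta t_{i,i+1}>0$, we have $T_{i,0}<T_{i+1,0}$. I will assume $T_{i+1,0}\le T_{i,m}$, since otherwise the claim is vacuous. Then $[T_{i+1,0},T_{i,m}]$ lies in both domains, because $T_{i,m}<T_{i+1,m}$. The left endpoint $T_{i+1,0}$ is the time of $(L_0,T_{i+1,0})\in\widehat{C}_{i+1}$, and the right endpoint $T_{i,m}$ is the time of $(L_m,T_{i,m})\in\widecheck{C}_i$. Both lie in the interval, so both belong to $\widetilde{T_i}$.

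\emph{Step 3: minimum at critical times.} List $\widetilde{T_i}$ in order as $s_0<s_1<\dots<s_K$, with $s_0=T_{i+1,0}$ and $s_K=T_{i,m}$. On each $[s_k,s_{k+1}]$, neither bound changes slope in the open interior, so $d$ is affine there. Continuity lets us include the closed endpoints. For $t=(1-\lambda)s_k+\lambda s_{k+1}$ with $\lambda\in[0,1]$, affinity gives
\[
d(t)=(1-\lambda)d(s_k)+\lambda d(s_{k+1})\ge \mathit{safeD}.
\]
The intervals $[s_k,s_{k+1}]$ cover $[T_{i+1,0},T_{i,m}]$, which proves the claim.

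The main obstacle is the bookkeeping in Step 1. One must confirm that every slope change of either bound, including the transitions at CWPs between sections and the degenerate cases $\widecheck{t}_j\in\{0,\tau_j\}$, occurs at a time recorded in $\widecheck{C}_i\cup\widehat{C}_{i+1}$. One must also confirm that no jump occurs. I would handle this by directly evaluating the two formulas of \eqref{eq:x_min_to_max} at the section boundaries and at the switching time.
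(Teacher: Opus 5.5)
Your proof is correct and follows the same argument as the paper. In both, $d(t)$ is continuous and affine between consecutive elements of $\widetilde{T_i}$, so on each subinterval its minimum is attained at an endpoint, where the hypothesis applies. Your Steps 1--2 only make explicit what the paper asserts without detail: continuity of the bounds across CWP transitions, and membership of the endpoints $T_{i+1,0}$ and $T_{i,m}$ in $\widetilde{T_i}$, via $(L_0,T_{i+1,0})\in\widehat{C}_{i+1}$ and $(L_m,T_{i,m})\in\widecheck{C}_i$.
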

 
\begin{pf}
Let 
$d(t) := \widecheck{x}_i(t) - \widehat{x}_{i+1}(t)$.
From~\eqref{eq:x_min_to_max}, both bounds
$\widecheck{x}_i(t)$ and $\widehat{x}_{i+1}(t)$ are continuous
piecewise-affine functions whose slope changes 
only at the
critical times in $\widetilde{T_i}$.  
Hence, $d(t)$ is also continuous and affine on each interval between
consecutive elements of $\widetilde{T_i}$.
Order the elements of $\widetilde{T_i}$ as
$T_{i+1,0} = t_0 < t_1 < \dots < t_K = T_{i,m}$.
On each interval $[t_k, t_{k+1}]$, the function $d(t)$ is affine
and thus attains its minimum at one of the endpoints.
By the assumption of the theorem, $d(t_k) \ge \mathit{safeD}$ for all $k$, 
which implies 
$d(t) \ge \mathit{safeD}$ for all $t \in [t_k, t_{k+1}]$.
Taking the union over all intervals yields
$d(t) \ge \mathit{safeD}$ for all 
$t \in [T_{i+1,0},\,T_{i,m}]$,
which proves the claim.\hfill\hfill\qed 
\end{pf}

By Lemma~\ref{lem:etagap_bound_sufficient} and Theorem~\ref{thm:critical points}, it suffices to enforce the separation constraint at the critical times in $\widetilde{T_i}$.
This yields the following finite-dimensional optimization problem.

\begin{probl}[Reduced ETA-Gap Optimization]
\label{probl:reduced_mingap}
\begin{align*}
\begin{split}
\underset{\Delta t_{i,i+1}\,\in\,[0,\;\sum_{j=0}^{m-1}\tau_j]}{\min}\quad
& \Delta t_{i,i+1} \\[3pt]
\text{subject to}\quad
& \widecheck{x}_i(t)-\widehat{x}_{i+1}(t)
    \;\ge\; \mathit{safeD},
    \quad \forall\, t \in \widetilde{T_i}, \\[4pt]
& T_{i+1,0}-T_{i,0} \;=\; \Delta t_{i,i+1}.
\end{split}
\end{align*}
\end{probl}

\begin{rmk}
Since all vehicles are assumed to share the same travel time $\tau_j$ in each
corridor section $j$, the resulting minimum ETA gap
$\Delta t_{i,i+1}$ is the same for all consecutive pairs.
We nevertheless formulate Problem~\ref{probl:reduced_mingap} for a generic pair
$(i,i{+}1)$ to highlight that the method naturally extends to heterogeneous
section travel times across vehicles.
\end{rmk} 

Combining Theorem~\ref{thm:critical points} with
Lemma~\ref{lem:etagap_bound_sufficient} yields the following conclusion.

\begin{coro}
\label{cor:combine}
Consider two consecutive vehicles $i$ and $i+1$ satisfying the speed-limit
condition~\eqref{eq vlim} and spending $\tau_j$ time units in each corridor
section $j\in\{0,\dots,m-1\}$. Let $\Delta t_{i,i+1}=T_{i+1,0}-T_{i,0}$ denote their ETA gap at CWP$_0$.
If $\Delta t_{i,i+1}$ is a solution of Problem~\ref{probl:reduced_mingap}, 
then $\Delta t_{i,i+1}$ is a feasible solution of Problem~\ref{probl:etagap}.
In particular, it guarantees
\[
x_i(t) - x_{i+1}(t) \;\ge\; \mathit{safeD},
\quad \forall t \in [T_{i+1,0},\,T_{i,m}].
\]
\end{coro}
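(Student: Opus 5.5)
The corollary follows by chaining Theorem~\ref{thm:critical points} with Lemma~\ref{lem:etagap_bound_sufficient}, once we check that their hypotheses match the setting of Problem~\ref{probl:reduced_mingap}.

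\emph{Step 1: feasibility for Problem~\ref{probl:reduced_mingap}.} Let $\Delta t_{i,i+1}$ be a solution of Problem~\ref{probl:reduced_mingap}. It is then feasible, so $T_{i+1,0}=T_{i,0}+\Delta t_{i,i+1}$. Fixing this value determines every $T_{i+1,j}$ through~\eqref{eq:Lj Tij}. It therefore also fixes the bounds $\widehat{x}_{i+1}$ and $\widecheck{x}_i$ and the finite set $\widetilde{T_i}$ in~\eqref{eq:critical_time_set}. By the constraint of the problem, $\widecheck{x}_i(t)-\widehat{x}_{i+1}(t)\ge\mathit{safeD}$ for every $t\in\widetilde{T_i}$.

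\emph{Step 2: from critical times to the whole interval.} We apply Theorem~\ref{thm:critical points}. Its hypothesis is exactly the conclusion of Step~1, and we need $\mathit{safeD}\ge 0$, which holds since $\mathit{safeD}$ is a separation distance. The theorem then gives~\eqref{eq:bound_sep_condition} on all of $[T_{i+1,0},T_{i,m}]$.

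One point needs care, and I expect it to be the main obstacle. The proof of Theorem~\ref{thm:critical points} uses $T_{i+1,0}$ and $T_{i,m}$ as the first and last ordered critical times. So both endpoints must lie in $\widetilde{T_i}$, and the interval must be nonempty.
\begin{itemize}
\item $T_{i+1,0}$ is the time coordinate of $(L_0,T_{i+1,0})\in\widehat{C}_{i+1}$.
\item $T_{i,m}$ is the time coordinate of $(L_m,T_{i,m})\in\widecheck{C}_i$.
\item Both lie in the interval as long as $T_{i+1,0}\le T_{i,m}$, which follows from $\Delta t_{i,i+1}\le\sum_j\tau_j$.
\end{itemize}
In the degenerate case $\Delta t_{i,i+1}=\sum_j\tau_j$, the interval is a single point in $\widetilde{T_i}$ and the claim is immediate. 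I would also note that every slope change of $\widecheck{x}_i$ and $\widehat{x}_{i+1}$ that falls inside the interval has its time coordinate in $\widetilde{T_i}$. This holds because the only kinks of~\eqref{eq:x_min_to_max} and its hat analogue are the listed critical points. It is what makes the difference $d(t)$ affine between consecutive elements of $\widetilde{T_i}$.

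\emph{Step 3: conclusion.} With~\eqref{eq:bound_sep_condition} established and the vehicles satisfying~\eqref{eq vlim} and the section times $\tau_j$, Lemma~\ref{lem:etagap_bound_sufficient} applies. It shows that $\Delta t_{i,i+1}$ is feasible for Problem~\ref{probl:etagap}, and hence $x_i(t)-x_{i+1}(t)\ge\mathit{safeD}$ on $[T_{i+1,0},T_{i,m}]$. This is the interval in Problem~\ref{probl:etagap}, since $T_{i,m}=t_{i,\text{enter}}+\sum_j\tau_j$.

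Minimality of $\Delta t_{i,i+1}$ plays no role; only feasibility is used.
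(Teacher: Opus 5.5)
Your proposal is correct and follows the paper's own argument: the corollary is obtained by feeding the constraint of Problem~\ref{probl:reduced_mingap} into Theorem~\ref{thm:critical points} and then applying Lemma~\ref{lem:etagap_bound_sufficient}. Your checks that $T_{i+1,0}$ and $T_{i,m}$ lie in $\widetilde{T_i}$, that the interval is nonempty, and that every kink is captured make explicit what the paper leaves implicit.

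If you want to be fully rigorous about feasibility for Problem~\ref{probl:etagap}, also note why $\Delta t_{i,i+1}>0$ holds when $\mathit{safeD}>0$: if $\Delta t_{i,i+1}=0$, the constraint at the critical time $T_{i+1,0}$ would read $\widecheck{x}_i(T_{i+1,0})-\widehat{x}_{i+1}(T_{i+1,0})=0\ge\mathit{safeD}$, which fails.
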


Thus, using this ETA gap ensures that all vehicles remain at least
$\mathit{safeD}$ apart for all admissible trajectories.

\begin{rmk}
Corridor lengths and speed limits strongly affect the minimum ETA gap
$\Delta t_{{i, i+1}}$.
For instance, setting all corridor section lengths in Fig.~\ref{fig:bounds} to
500 m increases the gap to 16.5 s, compared with 13.3 s in the current setting.
This observation suggests that the ETA-gap computation can also serve as a
corridor-design evaluation tool. We leave systematic sensitivity analysis over
corridor lengths and speed-limit profiles for future work.
\end{rmk}

\noindent\textbf{Computation procedure.}
We end this section by summarizing the procedure for computing a sufficient ETA gap.
For a given safety distance $\mathit{safeD}$, the steps for each vehicle pair $(i,i+1)$ are:
\begin{enumerate}
    \item Compute the scheduled CWP times $T_{i,j}$ from the section travel
    times $\tau_j$ for all corridor sections $j$.
    \item Construct the spatiotemporal trajectory bounds
    $\widecheck{x}_i(t)$ and $\widehat{x}_{i+1}(t)$ using
    \eqref{eq:x_min_to_max} and its max-to-min counterpart.
        \item Form the critical-time set $\widetilde{T_i}$ from the critical points of
    the two trajectory bounds as in \eqref{eq:critical_time_set}.
    \item Solve Problem~\ref{probl:reduced_mingap}, i.e., find the smallest
    $\Delta t_{i,i+1}$ such that
    $\widecheck{x}_i(t)-\widehat{x}_{i+1}(t)\ge \mathit{safeD}$ for all
    $t\in\widetilde{T_i}$.
\end{enumerate}
By Corollary~\ref{cor:combine}, the resulting ETA gap  $\Delta t_{i,i+1}$ guarantees the required
separation for all admissible trajectories.

\section{Experimental Result}
\label{sec:simulation}

We validate the theoretical results through numerical simulation, comparing
operations with and without the proposed ETA gaps.  
We choose a decreasing-speed corridor as a challenging test case because downstream speed reduction can cause traffic compression and increase collision risk, e.g., when vehicles approach a vertiport. 
All simulations are implemented in Python~3.12 with a discrete-time step of
$\delta_t = 0.1$\,s. 
We use the corridor illustrated in Fig.~\ref{fig:bounds}, with $\mathit{safeD}$ values ranging from 100\,m to 1200\,m to assess performance under varying spacing requirements.
 
Vehicle motion follows the discrete-time version of \eqref{eq:kinematic}:
\[
x_i(k{+}1) = x_i(k) + \delta_t\, v_i(k),\quad
v_i(k{+}1) = v_i(k) + \delta_t\, a_i(k).
\]

For the first vehicle $i=0$, while it is in section~$j$, the acceleration is
chosen to reduce the difference between its current speed and the average speed
required to traverse the section:
$a_0(k)=l_j/\tau_j-v_0(k)$.
For each following vehicle $i>0$, inter-vehicle interactions are modeled using
the Helly vehicle-following model
\citep{helly1959simulation, ambrosio2018parameter, li2024helly}:  
\begin{equation*} 
    a_i(k)
  = \lambda_x \!\left( \Delta x - D(v_i(k)) \right) + \lambda_v\, \Delta v,
\end{equation*} 
where $\Delta x = x_{i-1}(k)-x_i(k)$,
$\Delta v= v_{i-1}(k)-v_i(k)$, and
\[
D(v_i(k)) = D_{\min} + T_{\mathrm{des}} v_i(k)
\]
is the desired spacing, with minimum distance $D_{\min}$ and dynamic spacing
$T_{\mathrm{des}}v_i(k)$.
All accelerations are adjusted as needed to keep speeds within
$[v_{\min,j},v_{\max,j}]$, and clipped to $[-3,2]$\,m/s$^2$, which are used as admissible acceleration bounds in the simulation.
We consider parameter values $\lambda_x = 0.7$, $\lambda_v = 0.5$,
$T_{\mathrm{des}} = 1.2\,\mathrm{s}$, and vary
$D_{\min} = \mathit{safeD} \in \{100,200,\ldots,1200\}$\,m.
The values of $\lambda_x$, $\lambda_v$, and $T_{\mathrm{des}}$ follow~\citep{li2024helly}.

We consider continuous vehicle entry over a 100\,s interval.
All vehicles enter the network at CWP$_0$ with an initial speed of
$82.5$\,m/s.  
Two operation modes are evaluated:

\emph{1) Without ETA gaps (no-ETA):}
Vehicles select their accelerations as described above.
The first vehicle enters at time zero, and vehicles $i>0$ enter the corridor
network at CWP$_0$ as soon as the resulting accelerations are positive.

\emph{2) With ETA gaps (ETA):}
Vehicles follow the same acceleration policy as in the no-ETA case, with
additional adjustments when needed to match the section travel times 
$\tau_0,\tau_1,\ldots,\tau_{m-1}$.
The first vehicle enters at time zero, and each following vehicle enters at
CWP$_0$ according to the minimum ETA-gap computed by solving
Problem~\ref{probl:reduced_mingap}.

\begin{figure}[t] 
\centering
\includegraphics[width=0.9\columnwidth]{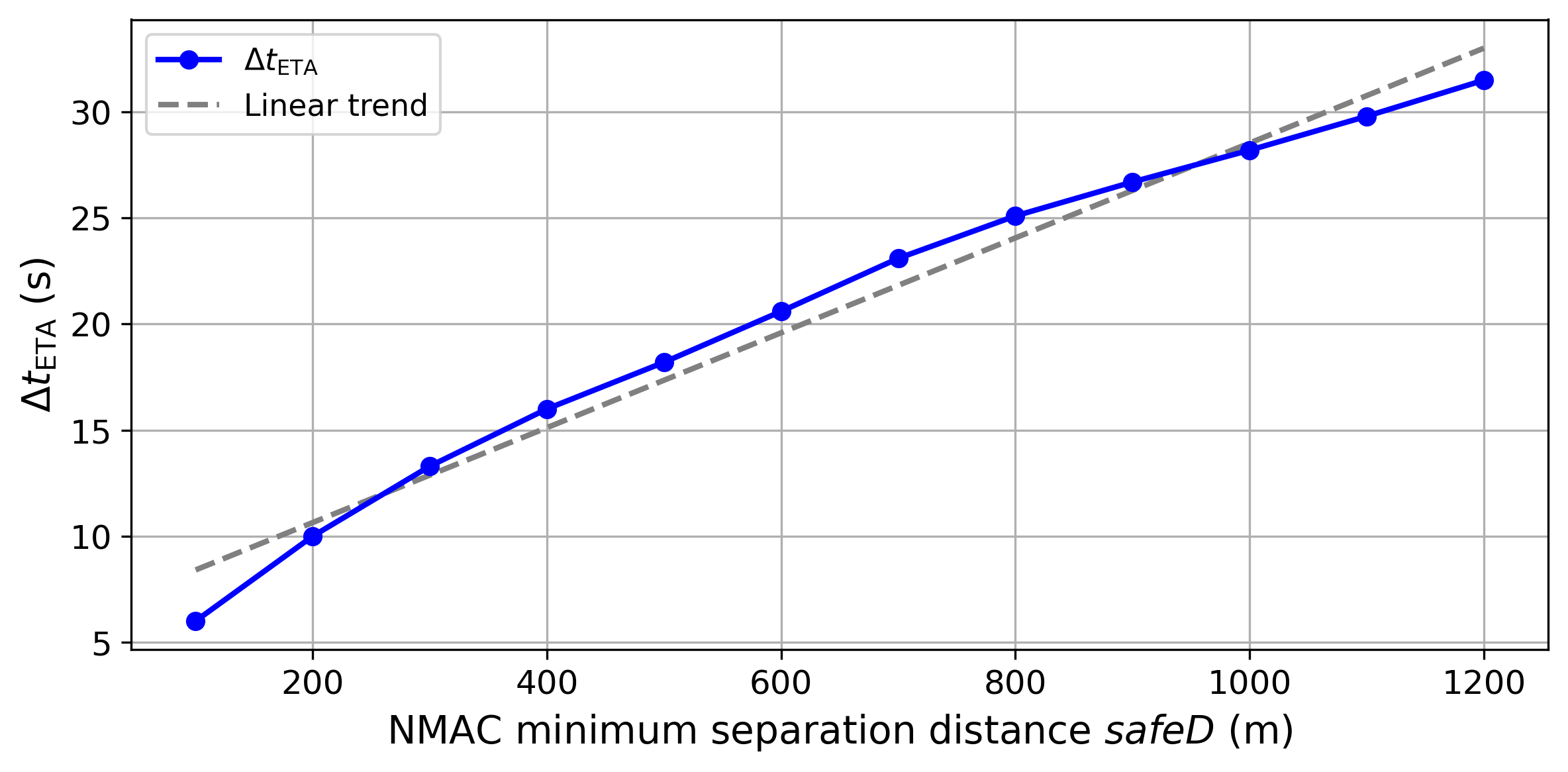}
\caption{Relationship between $\mathit{safeD}$ and   $t_{{i, i+1}}$. 
}
\label{fig:ex3}
\end{figure}
\begin{figure}[t] 
\centering
\includegraphics[width=0.9\columnwidth]{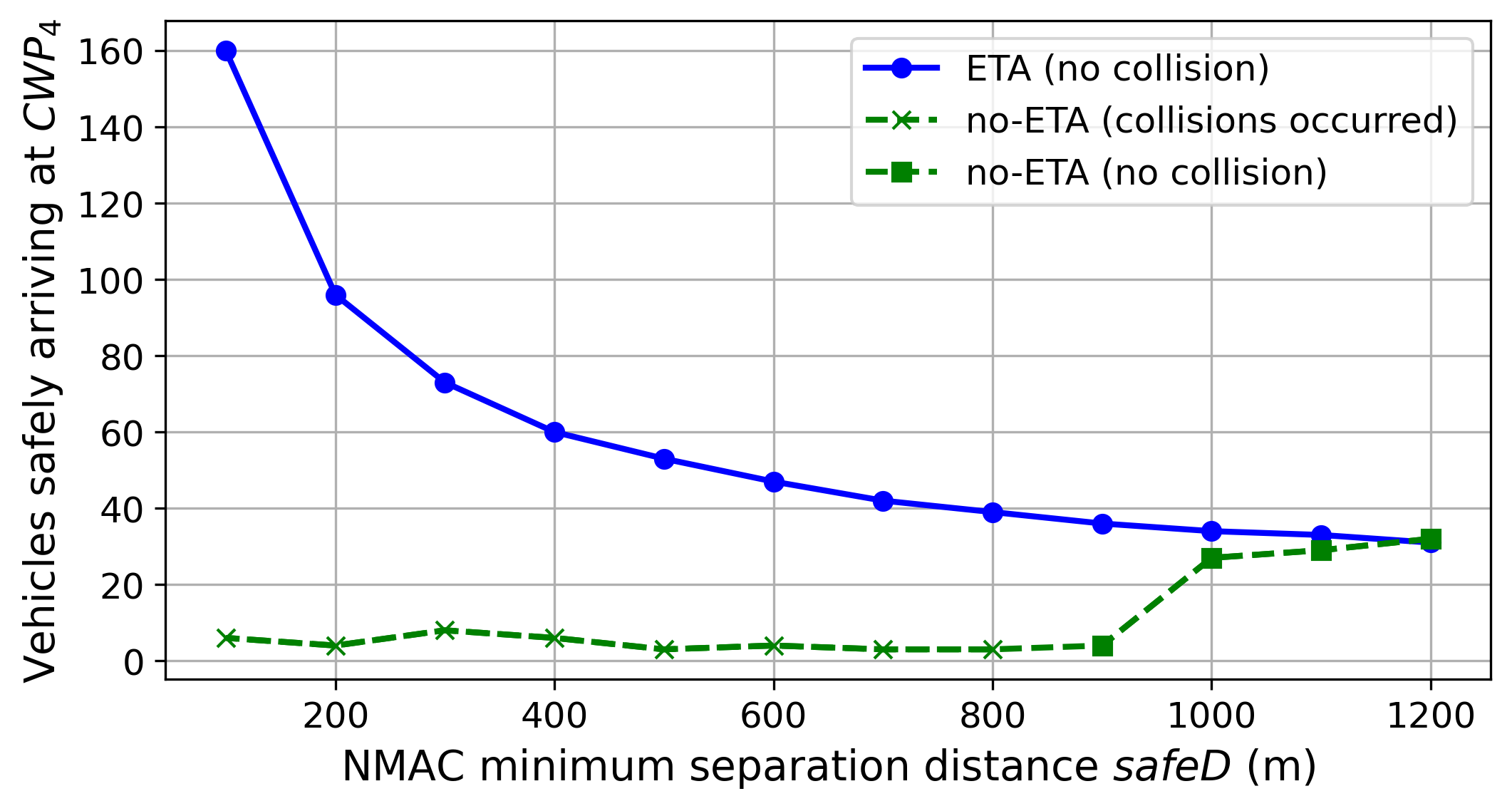}
\caption{Number of vehicles that safely arrived at $\text{CWP}_4$. 
}
\label{fig:ex2}
\end{figure}
\begin{figure}[t] 
\centering
\includegraphics[width=0.9\columnwidth]{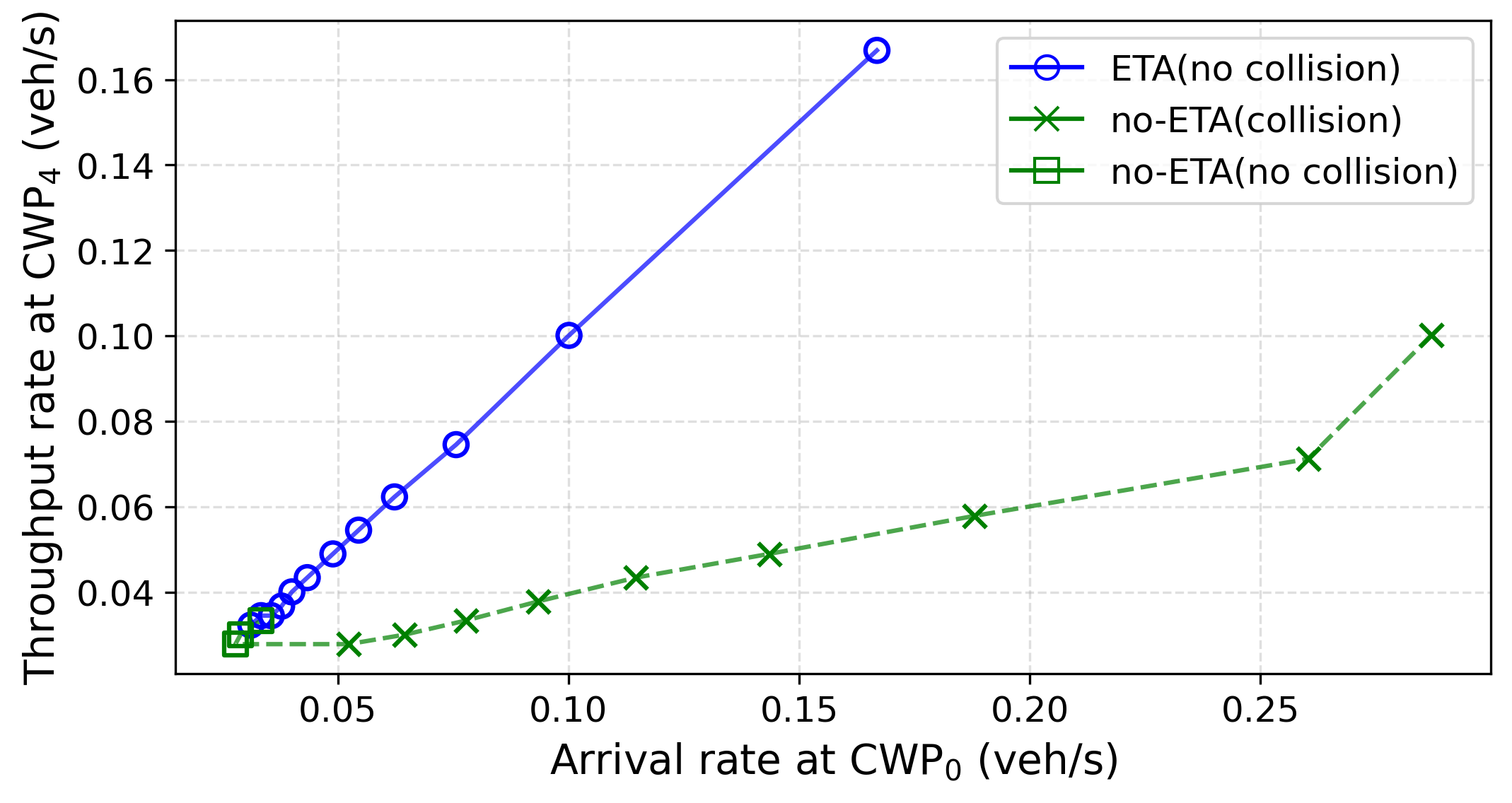}
\caption{Vehicle arrival rate at CWP$_0$ vesus the throughput rate at CWP$_4$.
}
\label{fig:ex4}
\end{figure}
\begin{figure}[t] 
\centering
\includegraphics[width=0.9\columnwidth]{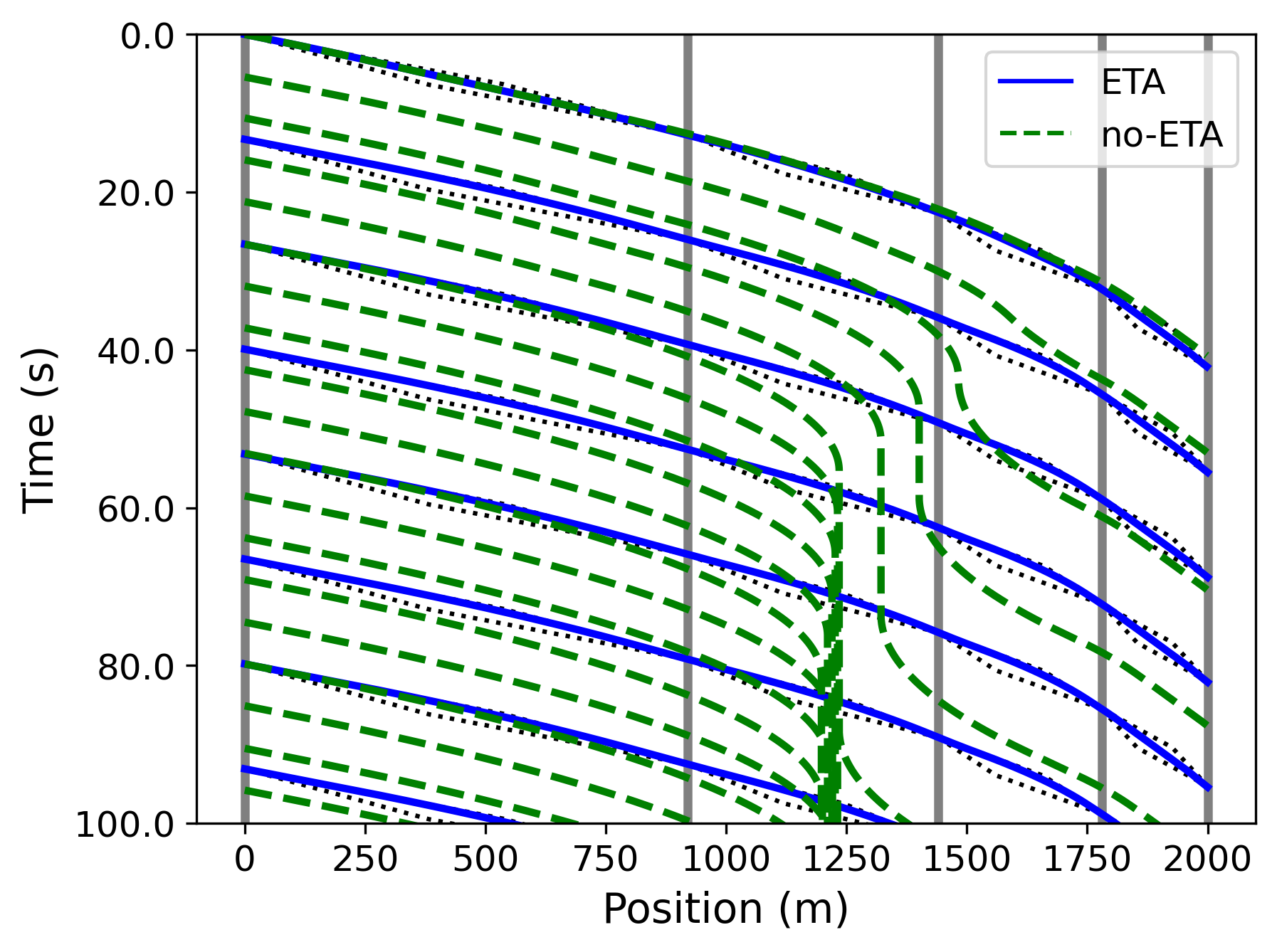}
\caption{Spatiotemporal trajectories of vehicles in UAM corridors for $\mathit{safeD}=300$ m, under the two operation modes: with and without ETA gaps. 
}
\label{fig:ex}
\end{figure}



\subsection*{Simulation Results and Analysis.}

We solve the optimization Problem~\ref{probl:reduced_mingap} for various values of
$\mathit{safeD}$.  
Across all tested values, the minimum ETA gap $\Delta t_{i,i+1}$ is computed in under $0.003$\,s on a MacBook Pro (M4 Max, 64\,GB) using the SciPy function \texttt{scipy.optimize.minimize\_scalar}.  
The resulting ETA gaps for $\mathit{safeD} = 100, 200, \ldots, 1200$\,m are
6.0, 10.0, 13.3, 16.0, 18.2, 20.6, 23.1, 25.1, 26.7, 28.2, 29.8, and
31.5\,s, respectively.  
As shown in Fig.~\ref{fig:ex3}, the required ETA gap increases monotonically
with $\mathit{safeD}$,  with growth gradually diminishing for larger values.

Figure~\ref{fig:ex2} shows the relation between $\mathit{safeD}$ and the number
of vehicles that successfully exit the corridor network at CWP$_4$ without collision.
The ETA-gap policy prevents all collisions for the entire range of
$\mathit{safeD}$ considered, even for $\mathit{safeD}=100$\,m.
In contrast, the no-ETA mode experiences collisions for all cases with
$\mathit{safeD} \le 800$\,m.  
For the ETA-gap mode, the number of vehicles that complete the full corridor
network within the simulation horizon (100\,s) decreases as $\mathit{safeD}$
increases, which is expected because larger separation distances require larger
ETA gaps at entry.
Nevertheless, for all $\mathit{safeD} \le 1100$\,m, the ETA-gap mode allows
more vehicles to exit safely than the no-ETA mode.  
Figure~\ref{fig:ex4} shows the arrival rate at CWP$_0$ versus the throughput rate at CWP$_4$. 
The ETA-gap mode achieves a nearly linear throughput increase, while the no-ETA mode saturates early and remains much lower. 
Thus, ETA-based coordination improves both safety and flow.
Moreover, across all simulations, every vehicle operating with ETA-gap remains at least
$\mathit{safeD}$ apart at all times.

Figure~\ref{fig:ex} shows example spatiotemporal trajectories with and without ETA gaps for
$\mathit{safeD}=300$\,m. 
Vehicles operating under the ETA-gap policy exhibit smooth, uniform flow: their
trajectories follow the bounds 
$\widecheck{x}_i(t)$ and $\widehat{x}_i(t)$.  
In contrast, vehicles in the no-ETA mode experience reactive braking as they
approach sections with reduced speed limits, resulting in 
congestion
and, eventually, collisions.



These results show that the ETA-gap mechanism preserves safe inter-vehicle separation while
substantially improving traffic flow compared to the unscheduled (no-ETA) mode.

\section{Conclusion}
\label{sec:conclusion}

We presented a computationally lightweight framework for safety-assured arrival scheduling in sequential UAM corridors.
By coordinating ETAs at CWPs, we derived a sufficient ETA-gap rule that guarantees inter-vehicle separation across corridor sections with heterogeneous speed limits using conservative spatiotemporal bounds.
Numerical simulations confirmed that the ETA-gap policy maintained the required spacing in all scenarios, whereas unscheduled operations experienced congestion and collisions.
These findings demonstrate that ETA coordination can improve safety and flow stability in UAM corridors.

These results provide a basis for further developing ETA-based coordination for UAM corridor operations.
Future work will consider additional separation criteria, corridor-design strategies, alternative admission policies, ETA-tracking uncertainty, and merging traffic.
Other interesting future directions include higher-fidelity vehicle dynamics, mixed speed-profile corridors, and sensitivity studies of vehicle-following parameters and corridor configurations.

\section*{Acknowledgment}
During the preparation of this work, the authors used ChatGPT-5.4 to support grammar, spelling checks, and wording refinement. 
After using this tool/service, the authors reviewed and edited the content as needed and takes full responsibility for the content of the publication. 

\bibliography{ifacconf}             
                                                   







\end{document}